\definecolor{myurlcolor}{rgb}{0,.5,.5}
\definecolor{mycitecolor}{rgb}{0,.6,0}
\definecolor{myrefcolor}{rgb}{2,0,0}
\newcommand*{\addFileDependency}[1]{
  \typeout{(#1)}
  \@addtofilelist{#1}
  \IfFileExists{#1}{}{\typeout{No file #1.}}
}
\newcommand*{\myexternaldocument}[1]{
    \externaldocument{#1}
    \addFileDependency{#1.tex}
    \addFileDependency{#1.aux}
}
\newcommand{\beq}[0]{\begin{equation}}
\newcommand{\eeq}[0]{\end{equation}}
\newcommand{\one}{\leavevmode\hbox{\small1\normalsize\kern-.33em1}}
\def\be{\begin{equation}}
\def\ee{\end{equation}}
\def\ben{\begin{eqnarray}}
\def\een{\end{eqnarray}}
\def\eea{\end{array}}
\def\bea{\begin{array}}
\newcommand{\Tr}[1]{\mathrm{Tr}#1}
\newcommand{\bei}{\begin{itemize}}
\newcommand{\eei}{\end{itemize}}
\newcommand{\ket}[1]{|#1\rangle}
\newcommand{\bra}[1]{\langle#1|}
\newcommand{\proj}[1]{\ket{#1}\!\!\bra{#1}}
\newcommand{\I}{\mathbbm{1}}
\newcommand{\p}{\Vec{p}}
\renewcommand{\emph}[1]{\textbf{#1}}
\newtheorem*{rep@theorem}{\rep@title}
\newcommand{\newreptheorem}[2]{%
\newenvironment{rep#1}[1]{%
 \def\rep@title{#2 \ref{##1}}%
 \begin{rep@theorem}}%
 {\end{rep@theorem}}}
\theoremstyle{plain}
\newtheorem{thm}{Theorem}%[section]
\newtheorem*{thm*}{Theorem}
\newtheorem{lem}{Lemma}
\newtheorem{fakt}{Fact}
\newtheorem{defn}[thm]{Definition}
\theoremstyle{definition}
\theoremstyle{remark}
\begin{document}

\title{Operationally independent events can influence each other in quantum theory}
\author{Shubhayan Sarkar}
\email{shubhayan.sarkar@ulb.be}
\affiliation{Laboratoire d’Information Quantique, Université libre de Bruxelles (ULB), Av. F. D. Roosevelt 50, 1050 Bruxelles, Belgium}

%%%%%%%%%%%%%%%%%%%%%%%%%%%%%%%%%%%%%%%%%%%%%%%%%%%%%%%%%%%%%%%%%%%
\begin{abstract}	
In any known description of nature, two physical systems are considered independent of each other if any action on one of the systems does not change the other system. From our classical intuitions about the world, we further conclude that these two systems are not affecting each other in any possible way, and thus these two systems are causally disconnected or they do not influence each other. Building on this idea, we show that in quantum theory such a notion of ``classical independence'' is not satisfied, that is, two quantum systems can still influence each other even if any operation on one of the systems does not create an observable effect on the other. For our purpose, we consider the framework of quantum networks and construct a linear witness utilizing the Clauser-Horne-Shimony-Holt inequality. We also discuss one of the interesting applications resulting from the maximal violation of ``classical independence'' towards device-independent certification of quantum states and measurements.
\end{abstract}

%%%%%%%%%%%%%%%%%%%%%%%%%%%%%%%%%%%%%%%%%%%%%%%%%%%%%%%%%%%%%%%%%%%

\maketitle

{\it{Introduction---}} %Nonlocality  clearly establishes the departure of the quantum world from classical physics. It was first proven by Bell \cite{Bell66} that there exist probability distributions that can be generated among spatially separated parties via quantum states and measurements that can not be realised using any local description. 
Nonlocality is one of the most fascinating aspects of quantum theory, encapsulating the absence of a local description for spatially separated quantum systems that can not communicate with each other. Discovered by Bell in 1964 \cite{Bell, Bell66} and then experimentally observed in the last decades \cite{Bellexp1, Bellexp2, Bellexp3, Bellexp4}, the phenomenon of nonlocality clearly establishes the departure of the quantum world from classical physics. An equivalent way to understand it is that two quantum systems can be correlated in a stronger way than two classical systems.

%In this work, we ask an even stricter question. Consider two systems that are not correlated with each other, that is, they are mutually independent. Then, are these two systems interacting with each other, or equivalently can one system affect its counterpart even if they are mutually independent? From our classical intuition about nature, we can conclude that if they are not allowed to communicate with each other and are mutually independent, then they can not affect each other in any possible way. 

In this work, we pose an even more stringent inquiry: consider two systems that exhibit no correlation with each other, meaning they are mutually independent. The fundamental question we address is whether these two independent systems can influence each other in any manner. Equivalently, can one system exert an impact on its counterpart when there is no correlation and no communication between them? Drawing upon our classical understanding of the natural world, it can be logically deduced that in the absence of communication and with both systems being mutually independent, they cannot exert any influence on each other in any conceivable manner. We consider this viewpoint as a notion of classicality and term it ``classical independence''.

Here, we show that the notion of classical independence is violated in quantum theory, that is, two mutually independent quantum systems might affect each other if they are individually entangled to some different quantum systems. For our purpose, we consider the framework of quantum networks, in particular, the quantum bilocality scenario \cite{pironio1, Pironio22} with weaker constraints on the network. We then derive a linear inequality inspired by Clasuer-Horne-Shimony-Holt (CHSH) inequality \cite{CHSH}. Restricting to operationally independent correlations, we find an upper bound for correlations that can be described in a classically independent way. We then find a set of quantum states and measurements that violate this bound. For a remark, just like the bilocality scenario \cite{pironio1, Pironio22} our setup is inspired by the phenomenon of entanglement swapping. In fact, one can also consider this work as probing the nature of non-classicality that is violated in the entanglement swapping experiment. Furthermore, we find the maximum value of the inequality that can be attained in quantum theory. Interestingly using the methods presented in \cite{sarkar2023,sarkar2023universal}, also allows us to certify the quantum states and measurements in a device-independent way from the maximal violation of the constructed inequality. Most of the recent works in network nonlocality point to the fact that observing genuine quantum nonlocality in networks requires non-linear inequalities, which additionally require the assumption that the sources generate independent and identically distributed random variables, commonly known as the i.i.d. assumption. We show here that one can also obtain linear inequalities to observe network nonlocality by considering weaker assumptions on the underlying hidden variable models.

\begin{figure*}[t]
    \centering
    \includegraphics[scale=.7]{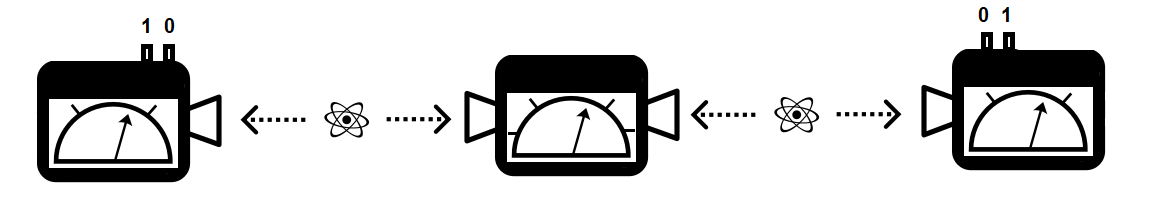}
    \caption{Weak-bilocality scenario. Alice and Bob each receive a single particle from their respective sources, which might be correlated to each other, while Eve receives two particles, one from each of these sources. Alice and Bob have the freedom to independently select and conduct two dichotomic measurements on their respective particles. In contrast, Eve's measurement is constrained to a single four-outcome measurement. None of the parties can communicate with each other.}
    \label{fig1}
\end{figure*} 

{\it{Classical independence.---}} Consider two systems with Alice and Bob such that measurements $\mathcal{A},\mathcal{B}$ with outcomes $a,b$ can be performed on them respectively. Now, we define when these two systems can be considered to be operationally independent of each other. 
\begin{defn}[Operational independence]\label{def1} Two systems are operationally independent if the probability of obtaining an outcome when performing a measurement on one system is completely independent of the other, that is,  
\begin{equation}\label{oip}
    p(a|\mathcal{A},b,\mathcal{B})=p(a|\mathcal{A})\qquad\forall a,b,\mathcal{A},\mathcal{B}
\end{equation}
    
\end{defn}
and similarly for Bob's outcome $b$.
The resulting joint probability $p(a,b|\mathcal{A},\mathcal{B})$ factors out using Bayes rule as
\begin{equation}\label{mu}
    p(a,b|\mathcal{A},\mathcal{B})=p(a|\mathcal{A})p(b|\mathcal{B})\qquad  \forall a,b,\mathcal{A},\mathcal{B}.
\end{equation}
Inspired by the above definition, we define the principle of no influence as stated below.
\begin{defn}[No-influence principle] \label{def2}Two systems do not influence each other if given any additional information $e$, there exist some hidden variables $\lambda$ such that the probability of obtaining an outcome when performing a measurement on one system is completely independent of the other, that is,  
\begin{equation}\label{nip}
    p(a|\mathcal{A},b,\mathcal{B},\lambda,e)=p(a|\mathcal{A},\lambda,e)\qquad\forall a,b,e,\mathcal{A},\mathcal{B},\lambda.
\end{equation}
\end{defn}
and similarly for Bob's outcome $b$.
It is straightforward to observe that no-influence principle implies operational independence. This brings us to the definition of classicality, which we call classical independence.
%Consequently, the principle of no-influence imposes that the joint probabilities behave as
%\begin{equation}
    %p(a,b|\mathcal{A},\mathcal{B},\lambda)=p(a|\mathcal{A},\lambda)p(b|\mathcal{B},\lambda)\qquad  \forall \lambda,\mathcal{A},\mathcal{B}.
%\end{equation}

\begin{defn}[Classical independence] Two operationally independent events [def. \ref{def1}] are classically independent of each other if they do not influence each other [def. \ref{def2}], or to put it simply the notion of classical independence means
\begin{eqnarray}
    \mathrm{Operational\ Independence}\implies  \mathrm{No\ Influence}.
\end{eqnarray}
\end{defn}
Equivalently the above definition can be understood as, if the correlations between two parties are mutually independent for any possible choice of measurement of both parties, then given any additional information $e$ there always exists a hidden variable model where both parties do not influence each other. 

%Likewise preparation non-contextuality \cite{Spekkens}, the above result can also be considered as a violation of Leibniz principle of indiscernibles \cite{sep}, that can be simply stated as principles that hold operationally should also hold ontologically. In the present context, if two systems are operationally independent does not imply that they are ontologically independent. 
Consider again the no-influence principle which is mathematically equivalent to the assumption of local causality in the Bell scenario. However, the striking difference is the fact that the parties involved in the Bell scenario are not operationally independent. Furthermore, it should be noted that Alice's or Bob's results can be affected by some Eve's results, who is not locally causal to either Alice or Bob. As a result, we name the assumption as the "no-influence principle" to signify that there is no causal connection between Alice and Bob but each of them could be influenced by some other spatially separated system. Furthermore, in the context of Bell scenario, one usually justifies the assumption of local causality due to space-like separated events and postulates of relativity. On the contrary, here it is more natural as Alice and Bob are not correlated to each other. Consequently, the scenario considered in this work is weaker when compared to the Bell scenario, that is, we identify non-classical behaviour even in situations where one can not violate a Bell inequality. Unlike the Bell scenario, it should be noted here that to ensure operational independence \eqref{oip}, one needs to perform "all" possible measurements. Although not practical at present, there exists an operational criterion to ensure it, unlike the Bell scenario where one can not justify special relativity operationally.

Let us now construct a scenario where we can observe the violation of classical independence with quantum states and measurements. A natural scenario that one could investigate in this regard is the standard Bell scenario. However, it is quite clear that if the correlations between two parties are operationally independent \eqref{mu}, then one can never observe any violation of a Bell inequality. Consequently in this work, we consider the bilocality scenario \cite{pironio1} with three parties as described below.

\textit{The scenario---} %In this work, 
We consider three parties namely, Alice, Bob and Eve in three different spatially separated labs. Alice and Bob receive a single particle from sources $S_1, S_2$ respectively and Eve receives two particles from both the sources. Unlike the bilocality scenario, the sources here need not be independent of each other, thus we call it ``weak-bilocality scenario''. Now, Alice and Bob perform two dichotomic measurements on their particles which they can freely choose. Eve on the other hand can only perform a single four-outcome measurement. The measurement inputs of Alice and Bob are denoted as $x,y=0,1$ respectively and their outcomes are denoted as $a,b=0,1$, whereas the outcomes of Eve are denoted as $e=0,1,2,3$. The scenario is depicted in Fig. \ref{fig1}. In Fig. \ref{fig2}, we show the causality graph of the weak-bilocality scenario.

The experiment is repeated enough times to construct the joint probability distribution or correlations, $\vec{p}=\{p(a,b,e|x,y)\}$ where $p(a,b,e|x,y)$ denotes the probability of obtaining outcome $a,b,e$ by Alice, Bob and Eve when they choose the inputs $x,y$ respectively. These probabilities can be computed in quantum theory using the Born rule as
\begin{eqnarray}
p(a,b,e|x,y)=\Tr\left[(N^A_{a|x}\otimes N^B_{b|y}\otimes N^E_e)\rho_{ABE}\right]
\end{eqnarray}
where $N^A_{a|x},N^B_{b|y},N^E_e$ denote the measurement elements of Alice, Bob and Eve corresponding to $x,y$ input and $\rho_{ABE}$ denotes the joint state generated by the source $S_1,S_2$. The measurement elements are positive semi-definite and $\sum_aN^A_{a|x}=\sum_bN^B_{b|y}=\sum_eN^E_e=1$ for all $x,y$.
It is important to recall here that Alice and Bob can not communicate with each other during the experiment.

It is usually simpler to express the probabilities in terms of expectation values as
\begin{eqnarray}\label{exp1}
\langle\mathcal{A}_x\mathcal{B}_yN^E_e\rangle=p(0,0,e|x,y)+p(1,1,e|x,y)\nonumber\\-p(0,1,e|x,y)-p(1,0,e|x,y)
\end{eqnarray}
where $\mathcal{A}_x, \mathcal{B}_y$ denote Alice's and Bob's observable corresponding to the input $x,y$ respectively and can be expressed as $s_i=N^s_{0|i}-N^s_{1|i}$ for $s=A,B$. 

{\it{Violation of classical independence.---}}
Let us now restrict that the correlations $p(a,b|x,y)$ are operationally independent [def. \ref{def1}], that is, $p(a,b|x,y)=p(a|x)p(b|y)$.

Now, let us express the joint probability distribution $p(a,b,e|x,y)$ as 
\begin{equation}
    p(a,b,e|x,y)=\sum_{\lambda}p(\lambda) p(a,b,e|x,y,\lambda)
\end{equation}
Using Bayes rule, the probability $p(a,b,e|x,y,\lambda)$ can be rewritten as
\begin{equation}
p(a,b,e|x,y,\lambda)=p(a|x,b,y,e,\lambda)p(b|x,y,e,\lambda)p(e|x,y,\lambda).
\end{equation}
Assuming no-influence [def. \ref{def2}] allows us to conclude that $p(a|x,b,y,e,\lambda)=p(a|x,e,\lambda)$ and $p(b|x,y,e,\lambda)=p(b|y,e,\lambda)$. Furthermore,  the fact that Eve's outcome $e$ is independent of Alice's or Bob's inputs $x,y$ allows us to conclude that
\begin{equation}\label{prob1}
p(a,b,e|x,y)=\sum_{\lambda}p(a|x,e,\lambda)p(b|y,e,\lambda)p(e|\lambda)p(\lambda).
\end{equation}
Notice that in the bilocality scenario \cite{pironio1}, %or in the tripartite Bell setting [], 
 one additionally assumes that $p(a|x,e,\lambda)=p(a|x,\lambda)$ and $p(b|y,e,\lambda)=p(b|y,\lambda)$.

Let us now consider an example, as suggested by the referee, exhibiting a classical implementation of the above-presented notion. Consider again the weak-bilocality scenario [see Fig. \ref{fig1}] such that each source distributes one bit to each party. Furthermore,  both bits of Alice and Eve are either $0$ or $1$ with equal probability. Similarly, both bits of Bob and Eve are either $0$ or $1$ with equal probability. In this case, Alice and Bob are operationally independent, since each gets in half of the cases when the bit is $0$ or $1$ independent of the other party. The additional information $e$ here is whether Eve's both bits are equal or different, that is, $e=$ "$0$ (equal)" or "$1$ (different)". This is similar to the scheme of classical teleportation or entanglement swapping \cite{Spekkens2}. Now, one can always construct a hidden variable model with $\lambda=0,1$ such that for any  $e,a,b=0,1$ as
\begin{eqnarray}
    %p(s|e=0,\lambda)=\delta_{s,\lambda},\quad  
    p(a|e,\lambda)=\delta_{a,\lambda},\quad p(b|e,\lambda)=\delta_{b\oplus e,\lambda}
\end{eqnarray}
such that no-influence principle \eqref{nip} is satisfied. Thus, Alice and Bob are classically independent.
 
Inspired by \cite{CHSH}, we will now construct a linear functional constructed from the joint probability distribution $\vec{p}$ to show that quantum theory violates classical independence. In terms of observables, the functional can be represented as
\begin{eqnarray}\label{Wit1}
    \mathcal{I}=\big\langle\mathcal{A}_0(\mathcal{B}_0-\mathcal{B}_1)\mathcal{E}_0+\mathcal{A}_1(\mathcal{B}_0+\mathcal{B}_1)\mathcal{E}_1\big\rangle
\end{eqnarray}
where $\mathcal{E}_0=N_0^E-N_1^E-N_2^E+N_3^E$ and $\mathcal{E}_1=N_0^E+N_1^E-N_2^E-N_3^E$. As shown below, the above inequality can be broken up into conditional CHSH inequalities, up to the presence of Eve's measurement, which were useful to prove that every pure entangled state is Bell nonlocal \cite{popescu} and self-testing the Bell basis \cite{Marco}. 

Let us compute the maximum value of $\mathcal{I}$ \eqref{Wit1} achievable using correlations that satisfy ``classical independence". We will further call this value as the
classical bound and denote it as $\beta_{C}$. For this purpose, let us express the expectation value \eqref{exp1} using \eqref{prob1} as
\begin{eqnarray}
\langle\mathcal{A}_x\mathcal{B}_yN^E_e\rangle=\sum_{\lambda}p(\lambda)p(e|\lambda)\big(p(0|x,e,\lambda)-p(1|x,e,\lambda)\big)\nonumber\\ \big(p(0|y,e,\lambda)-p(1|y,e,\lambda)\big)\quad
\end{eqnarray}
which can be simply stated as 
\begin{eqnarray}\label{CI1}
\langle\mathcal{A}_x\mathcal{B}_yN^E_e\rangle=\sum_{\lambda}p(\lambda)p(e|\lambda)\langle\mathcal{A}_{x,e,\lambda}\rangle\langle\mathcal{B}_{y,e,\lambda}\rangle.
\end{eqnarray}
Using the above expression, we now calculate the classical bound of $\mathcal{I}$ \eqref{Wit1}.

\begin{figure}
    \centering
    \includegraphics[scale=.27]{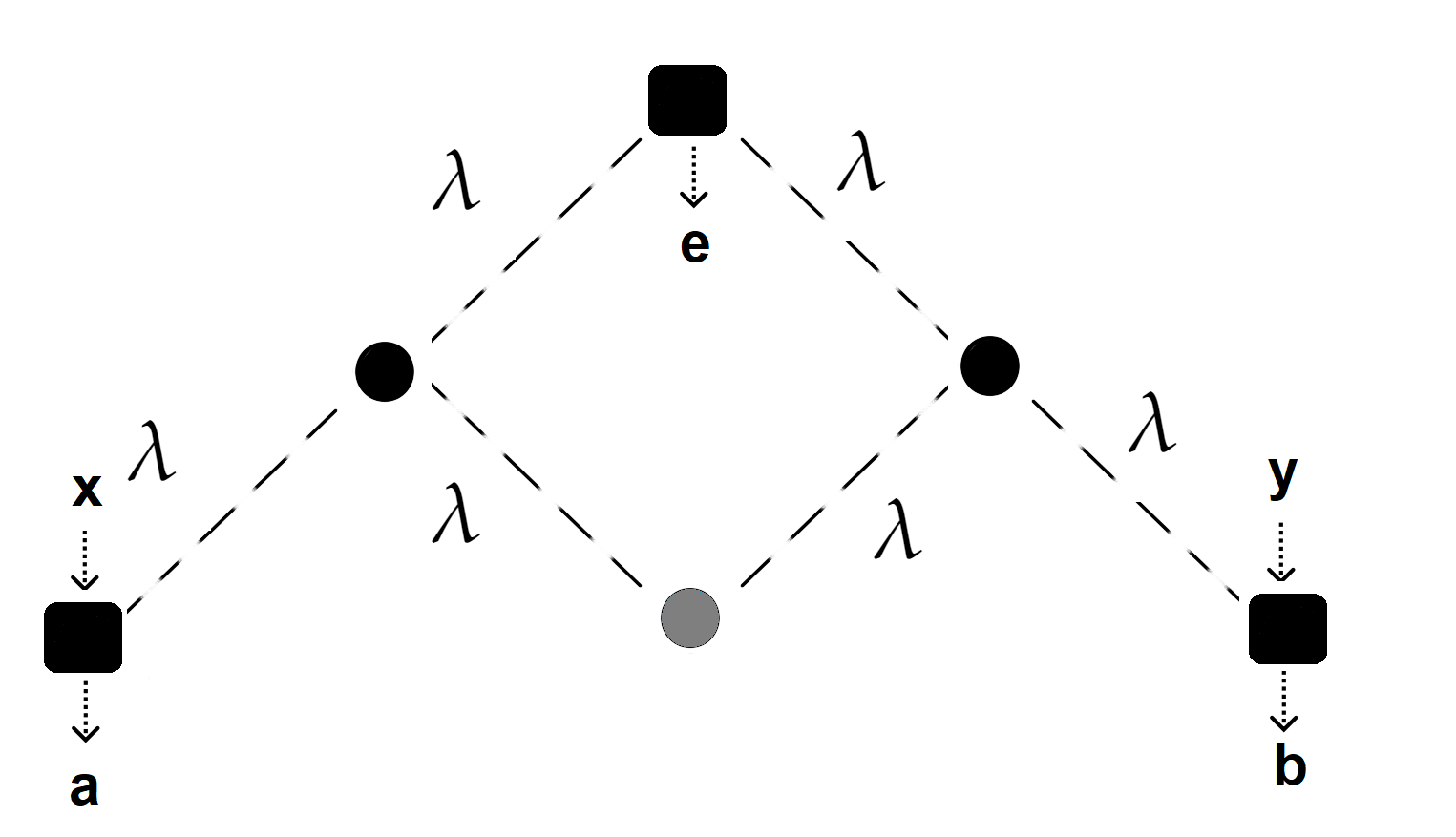}
    \caption{Causality graph of the weak-bilocality scenario. The square boxes represent the measurement devices and the circles represent the sources. The grey circle represents a hidden variable that might correlate the sources.}
    \label{fig2}
\end{figure}

\begin{fakt}\label{fact1}
    Consider the functional $\mathcal{I}$ \eqref{Wit1}. The maximum value $\beta_{C}$ that can be achieved by classically independent correlations is $\beta_{C}=2$.
\end{fakt}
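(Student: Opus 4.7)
The plan is to substitute the classical-independence factorization \eqref{CI1} into $\mathcal{I}$ and reduce, at each fixed value of $(e,\lambda)$, to the familiar deterministic CHSH bound. First I would write $\mathcal{E}_i=\sum_e c_e^{(i)}N_e^E$ with the sign vectors $c^{(0)}=(+,-,-,+)$ and $c^{(1)}=(+,+,-,-)$, and expand $\mathcal{I}$ as a linear combination of terms $\langle\mathcal{A}_x\mathcal{B}_y N_e^E\rangle$. Inserting \eqref{CI1} into each term and exchanging the order of summation gives
\begin{equation*}
\mathcal{I}=\sum_{e,\lambda}p(\lambda)p(e|\lambda)\,F(e,\lambda),
\end{equation*}
where
\begin{equation*}
F(e,\lambda)=c_e^{(0)}\alpha_0(\beta_0-\beta_1)+c_e^{(1)}\alpha_1(\beta_0+\beta_1),
\end{equation*}
and $\alpha_x=\langle\mathcal{A}_{x,e,\lambda}\rangle$, $\beta_y=\langle\mathcal{B}_{y,e,\lambda}\rangle$ all lie in $[-1,1]$.

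Next I would bound $F(e,\lambda)$ uniformly. Because $|c_e^{(i)}|=1$, the triangle inequality combined with the elementary identity $|\beta_0-\beta_1|+|\beta_0+\beta_1|=2\max(|\beta_0|,|\beta_1|)\leq 2$ immediately yields $|F(e,\lambda)|\leq 2$; equivalently, after absorbing the signs $c_e^{(i)}$ into $\alpha_x$ one recognizes $F(e,\lambda)$ as the standard CHSH combinator, whose maximum over $[-1,1]^4$ is $2$ (one of the factors $\beta_0\pm\beta_1$ vanishes at the extremes while the other reaches $\pm 2$). Averaging with the normalized weights $p(\lambda)p(e|\lambda)$ then produces $\mathcal{I}\leq 2$.

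Finally, I would demonstrate tightness by exhibiting an operationally independent deterministic strategy: take $\lambda$ trivial, let Alice and Eve output $0$ with certainty on every input, and let Bob output $0$ for $y=0$ and $b\in\{0,1\}$ uniformly for $y=1$. Operational independence $p(a,b|x,y)=p(a|x)p(b|y)$ is immediate, the no-influence decomposition \eqref{prob1} holds with the trivial $\lambda$, and direct substitution gives $\langle\mathcal{A}_x\mathcal{B}_0\mathcal{E}_i\rangle=1$, $\langle\mathcal{A}_x\mathcal{B}_1\mathcal{E}_i\rangle=0$, hence $\mathcal{I}=2$. The only delicate step in the argument is the sign bookkeeping when the $c_e^{(i)}$ are absorbed into $\alpha_x$; beyond that, everything collapses to the standard CHSH linear-programming bound.
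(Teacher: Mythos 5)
Your argument is correct and follows essentially the same route as the paper's proof: decompose $\mathcal{I}$ into conditional CHSH combinations weighted by Eve's outcomes, insert the factorized correlators from \eqref{CI1}, and bound each term by $2$ via $|\beta_0-\beta_1|+|\beta_0+\beta_1|=2\max(|\beta_0|,|\beta_1|)\le 2$ before averaging over $(e,\lambda)$. The one genuine addition is your explicit classically independent strategy attaining $\mathcal{I}=2$; the paper's proof only establishes the upper bound $|\mathcal{I}|\le 2$ and never shows it is saturated, so your version is actually slightly more complete with respect to the claim that the maximum \emph{equals} $2$.
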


\begin{proof}
    For our purpose, we rewrite the functional $\mathcal{I}$ \eqref{Wit1} as
    \begin{eqnarray}\label{wit2}
        \mathcal{I}= \sum_{i,j=0,1}\langle\mathcal{I}_{CHSH}^{i,j}N_{i,j}^{E}\rangle
    \end{eqnarray}
    such that $N_{i,j}^{E}\equiv N_{2i+j}^E$ and 
    \begin{equation}\label{wit3}
    \mathcal{I}_{CHSH}^{i,j}=(-1)^i(-1)^j\mathcal{A}_0(\mathcal{B}_0-\mathcal{B}_1)+\mathcal{A}_1(\mathcal{B}_0+\mathcal{B}_1)
    \end{equation}
    Now, taking the absolute value of Eq. \eqref{wit2} and using triangle inequality gives us
    \begin{eqnarray}
          |\mathcal{I}|\leq\sum_{i,j=0,1}\left|\langle\mathcal{I}_{CHSH}^{i,j}N_{i,j}^{E}\rangle\right|.
    \end{eqnarray}
    Expanding the terms on the right-hand side individually and using \eqref{CI1} gives us
    \begin{eqnarray}
    \left|\langle\mathcal{I}_{CHSH}^{i,j}N_{i,j}^{E}\rangle\right|\leq\sum_{\lambda}p(\lambda)p(i,j|\lambda) \Big( \Big|\langle\mathcal{A}_{0,i,j,\lambda}\rangle(\langle\mathcal{B}_{0,i,j,\lambda}\rangle-\nonumber\\ \langle\mathcal{B}_{1,i,j,\lambda}\rangle)\Big|+\Big|(-1)^j\langle\mathcal{A}_{1,i,j,\lambda}\rangle(\langle\mathcal{B}_{0,i,j,\lambda}\rangle+\langle\mathcal{B}_{1,i,j,\lambda}\rangle)\big\rangle\Big|\Big).\ \ 
    \end{eqnarray}
    Now, using the fact that
    $|\langle\mathcal{A}_{x,e,\lambda}\rangle|\leq 1$ for any $x,e,\lambda$ we obtain that
\begin{eqnarray}\label{CI2}
    \left|\langle\mathcal{I}_{CHSH}^{i,j}N_{i,j}^{E}\rangle\right|\leq\sum_{\lambda}p(\lambda)p(i,j|\lambda) \Big(\Big|\langle\mathcal{B}_{0,i,j,\lambda}\rangle-\langle\mathcal{B}_{1,i,j,\lambda}\rangle\Big|\nonumber\\ +\Big|\langle\mathcal{B}_{0,i,j,\lambda}\rangle+\langle\mathcal{B}_{1,i,j,\lambda}\big\rangle\Big|\Big).\quad
\end{eqnarray}
Furthermore, 
\begin{eqnarray}
    \Big|\langle\mathcal{B}_{0,i,j,\lambda}\rangle-\langle\mathcal{B}_{1,i,j,\lambda}\rangle\Big|+\Big|\langle\mathcal{B}_{0,i,j,\lambda}\rangle+ \langle\mathcal{B}_{1,i,j,\lambda}\big\rangle\Big|\nonumber\\=2\max(|\langle\mathcal{B}_{0,i,j,\lambda}\rangle|,|\langle\mathcal{B}_{1,i,j,\lambda}\rangle|)\leq 2.
\end{eqnarray}
Consequently, we obtain from Eq. \eqref{CI2} that 
\begin{eqnarray}
    \left|\langle\mathcal{I}_{CHSH}^{i,j}N_{i,j}^{E}\rangle\right|\leq 2 p(i,j)
\end{eqnarray}
which summing over $i,j$ finally gives us $|\mathcal{I}|\leq2$. This completes the proof.
\end{proof}

Consider now the quantum state $\ket{\psi_{ABE}}=\ket{\phi^+_{A\overline{A}}}\otimes\ket{\phi^+_{B\overline{B}}}$ such that $E\equiv \overline{AB}$ and $\ket{\phi^+}$ is the two-qubit maximally entangled state. It is easy to check that such a state will always generate correlations between Alice and Bob that are operationally independent [def. \ref{def1}]. Also, consider that Alice's and Bob's observables are given by
\begin{eqnarray}\label{abobs}
    \tilde{\mathcal{A}}_0&=&\sigma_z,\qquad\tilde{\mathcal{A}}_1=\sigma_x\nonumber\\
     \tilde{\mathcal{B}}_0&=&\frac{\sigma_z+\sigma_x}{\sqrt{2}},\qquad\tilde{\mathcal{B}}_1=\frac{\sigma_x-\sigma_z}{\sqrt{2}}
\end{eqnarray}
along with Eve's measurement given by the Bell-basis as $\tilde{N}_{i,j}^E=\ket{\phi_{i,j}}\!\bra{\phi_{i,j}}$ where $\ket{\phi_{i,j}}=\frac{1}{\sqrt{2}}(\ket{i\ j}+(-1)^{i}\ket{\overline{i}\ \overline{j}})$ where $i,j=0,1$ and $N_{i,j}^E\equiv N_{2i+j}^E$. Plugging these states and observables in the functional $\mathcal{I}$ \eqref{Wit1} gives us the value $2\sqrt{2}$. Thus, quantum theory violates the notion of classical independence. Consequently, one can conclude that systems that are operationally independent can influence each other. We show in theorem \ref{theorem1m} that $2\sqrt{2}$ is in fact the maximal value achievable using quantum theory.

We can identify some necessary conditions to violate the notion of classical independence using quantum states and measurements. The first necessary condition is that Eve needs to perform an entangled measurement as one can observe from Eq. \eqref{prob1}. This is contrary to the violation of bilocality which can also happen with separable measurements with Eve. Further on, the sources generating entangled states between Alice-Eve and Bob-Eve are necessary. Although the inequality \eqref{Wit1} considered in this work requires incompatible measurements to obtain any violation, one further needs to explore whether incompatible measurements are a necessity to violate classical independence or not. Let us now discuss an interesting application arising due to the violation of classical independence.

\textit{Self-testing.---} Self-testing is the strongest device-independent scheme that allows one to certify the quantum states and measurements without making any assumption on the devices involved apart from the fact that they are governed by quantum theory \cite{SupicReview}. Self-testing in quantum networks has been explored recently \cite{Marco, NLWEsupic, JW2, Allst1, supic4, sekatski, sarkar2023}, however, all of these schemes also assume that the sources are independent [see nevertheless \cite{sarkar15}]. Here, we do not need to assume it as we show below that the condition of operational independence [see Eq. \eqref{oip}] allows one to conclude that the sources are independent. 

%We first show here that in the weak-bilocality scenario [see Fig. \ref{fig1}],  operational independence implies that the sources $S_1,S_2$ are independent.

\begin{lem}\label{lem1}
    Consider the weak-bilocality scenario such that the state shared between Alice and Bob is given by $\rho_{AB}$. If Alice and Bob are operationally independent, then $\rho_{AB}=\rho_{A}\otimes\rho_{B}$.
\end{lem}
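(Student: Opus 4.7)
The strategy is to convert the distributional factorization $p(a,b|x,y)=p(a|x)p(b|y)$ into an operator-level identity via the Born rule, and then appeal to quantum tomography. First, I would set $\rho_{AB}=\Tr_E[\rho_{ABE}]$ and write
\begin{equation*}
p(a,b|x,y)=\Tr\!\left[(N^A_{a|x}\otimes N^B_{b|y})\,\rho_{AB}\right],
\end{equation*}
together with the marginals $p(a|x)=\Tr[N^A_{a|x}\rho_A]$ and $p(b|y)=\Tr[N^B_{b|y}\rho_B]$, where $\rho_A=\Tr_B[\rho_{AB}]$ and $\rho_B=\Tr_A[\rho_{AB}]$ are the reductions (these marginal expressions follow by summing the joint probability over the unobserved outcome and using $\sum_b N^B_{b|y}=\I$). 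Operational independence, being required for \emph{every} choice of measurements, then becomes the Hilbert--Schmidt orthogonality relation
\begin{equation*}
\Tr\!\left[(N^A_{a|x}\otimes N^B_{b|y})\,(\rho_{AB}-\rho_A\otimes\rho_B)\right]=0
\end{equation*}
valid for all admissible POVM effects on $A$ and $B$.

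Next, I would exploit the universal quantifier on measurements. Since rank-one projectors onto arbitrary pure states are legitimate POVM effects, the above identity holds in particular for $N^A_{a|x}\otimes N^B_{b|y}=\proj{\phi_A}\otimes\proj{\phi_B}$ with $\ket{\phi_A},\ket{\phi_B}$ ranging over the full state spaces of $A$ and $B$. By linear extension and polarisation, the Hilbert--Schmidt pairing of $\rho_{AB}-\rho_A\otimes\rho_B$ against every product operator $M_A\otimes M_B$ vanishes. Because product operators span the full operator algebra on $\hcal_A\otimes\hcal_B$, this forces the pairing to vanish against \emph{all} operators, and hence $\rho_{AB}=\rho_A\otimes\rho_B$ as claimed.

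The only non-routine step, and the one I would underline in the write-up, is the tomographic completeness of the set of measurements. If one only had access to two fixed dichotomic settings per party (as in the inequality $\mathcal{I}$), the argument would break down: factorisation of a few correlators does not imply product structure of the underlying state. What makes the lemma go through is precisely that Definition~\ref{def1} quantifies over \emph{all} measurements $\mathcal{A},\mathcal{B}$, giving an informationally complete family of effects on each side. I therefore expect no real obstacle beyond making this quantifier--to--tomography translation explicit, and perhaps adding a brief remark that the conclusion is a standard consequence of quantum state tomography applied to the bipartition $A|B$.
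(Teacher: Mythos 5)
Your proposal is correct and follows essentially the same route as the paper: the paper likewise plugs rank-one projective effects $\proj{\psi^a}\otimes\proj{\psi^b}$ into the factorization condition to get $\bra{\psi^a}\bra{\psi^b}(\rho_{AB}-\rho_A\otimes\rho_B)\ket{\psi^a}\ket{\psi^b}=0$ for all product vectors and then concludes $\rho_{AB}=\rho_A\otimes\rho_B$. The only difference is that you spell out the polarisation/span (tomographic completeness) step that the paper leaves implicit, which is a reasonable clarification rather than a divergence.
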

\begin{proof} If Alice and Bob are operationally independent, then 
\begin{eqnarray}
p(a,b|\mathcal{A},\mathcal{B})=p(a|\mathcal{A})p(b|\mathcal{B})\qquad  \forall a,b,\mathcal{A},\mathcal{B}.
\end{eqnarray}
Putting quantum states and restricting to rank-one projective quantum measurements gives us
\begin{eqnarray}
    \bra{\psi^a}\bra{\psi^b}\rho_{AB}\ket{\psi^a}\ket{\psi^b}= \bra{\psi^a}\rho_{A}  \ket{\psi^a}\bra{\psi^b}\rho_{B}  \ket{\psi^b}
\end{eqnarray}
for all $\ket{\psi^a},\ket{\psi^b}$. Thus, we can conclude that
\begin{eqnarray}
\bra{\psi^a}\bra{\psi^b}\rho_{AB}\ket{\psi^a}\ket{\psi^b}= \bra{\psi^a}\bra{\psi^b}\rho_{A}\otimes\rho_{B}\ket{\psi^a}\ket{\psi^b}
\end{eqnarray}
and consequently we have that $\rho_{AB}=\rho_{A}\otimes\rho_{B}$. This completes the proof.
    
\end{proof}

Let us now state the self-testing result. Let us recall here that any measurement can only be certified on the local support of the states. Consequently, one can always assume that the local states of all the parties are full-rank. Furthermore, our self-testing result is based on the definitions introduced in \cite{sarkar2023universal}.

\setcounter{thm}{0}

\setcounter{thm}{0}
\begin{thm}\label{theorem1m}
Assume that the operationally independent correlations $\p$ attain the quantum bound of $\mathcal{I}$ \eqref{Wit1}.  
Then, for $s=A,B$; 
 (i) The Hilbert spaces of all the parties decompose as $\mathcal{H}_{s}=\mathcal{H}_{s'}\otimes\mathcal{H}_{s''}$  and $\mathcal{H}_{\overline{s}}=\mathcal{H}_{\overline{s}'}\otimes\mathcal{H}_{\overline{s}''}$, where  $\mathcal{H}_{s'}=\mathcal{H}_{\overline{s}'}\equiv\mathbb{C}^2$ is the target Hilbert space and $\mathcal{H}_{s''},\mathcal{H}_{\overline{s}''}$ denotes the junk Hilbert spaces; (ii) There exist local unitary transformations $U_{s}:\mathcal{H}_{s}\rightarrow\mathcal{H}_{s}$ and $U_{\overline{s}}:\mathcal{H}_{\overline{s}}\rightarrow\mathcal{H}_{\overline{s}}$ 
such that 
\begin{equation}\label{statest1}
(U_{s}\otimes U_{\overline{s}})\ket{\psi_{s\overline{s}}}=|\phi^+_{s'\overline{s}'}\rangle\otimes\ket{\xi_{s''\overline{s}''}}    
\end{equation}
for some junk state  $\ket{\xi_{s''\overline{s}''}}\in\mathcal{H}_{s''}\otimes\mathcal{H}_{\overline{s}''}$, and the measurements of all parties are certified as
\begin{equation}\label{meast1}
\overline{U}\,N_{i,j}^E\,\overline{U}^{\dagger}=\proj{\phi_{i,j}}_{E'}\otimes\I_{E''}, 
\end{equation}
where $\overline{U}=\otimes_s U_{\overline{s}}$ and $E\equiv \overline{AB}$ denoting the system of Eve such that $\mathcal{H}_E=\mathcal{H}_{\overline{A}}\otimes\mathcal{H}_{{\overline{B}}}=\mathbb{C}^2\otimes\mathbb{C}^2\otimes\mathcal{H}_{\overline{A}''}\otimes\mathcal{H}_{\overline{B}''}\equiv(\mathbb{C}^2\otimes\mathbb{C}^2)_{E'}\otimes\mathcal{H}_{E''}$ with,
\begin{eqnarray}
   &U_{B}\mathcal{B}_{0}U_{B}^{\dagger}=\frac{\sigma_z+\sigma_x}{\sqrt{2}}\otimes\I_{B''},\ \   U_{B}\mathcal{B}_{1}U_{B}^{\dagger}=\frac{\sigma_x-\sigma_z}{\sqrt{2}}\otimes\I_{B''},\nonumber\\
      &U_{A}\mathcal{A}_{0}U_{A}^{\dagger}=\sigma_z\otimes\I_{A''},\quad  U_{A}\mathcal{A}_{1}U_{A}^{\dagger}=\sigma_x\otimes\I_{A''}.
\end{eqnarray}

\end{thm}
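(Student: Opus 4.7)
The plan is to exploit the decomposition of $\mathcal{I}$ from Fact~\ref{fact1} into four conditional CHSH terms, apply standard CHSH self-testing to each of Eve's outcomes, and then bootstrap this into the full self-testing statement using operational independence via Lemma~\ref{lem1}. Concretely, starting from the rewriting $\mathcal{I}=\sum_{i,j}\langle\mathcal{I}_{CHSH}^{i,j}N_{i,j}^{E}\rangle=\sum_{i,j}p(i,j)\,\langle\mathcal{I}_{CHSH}^{i,j}\rangle_{\rho_{AB|i,j}}$ in~\eqref{wit2}, Tsirelson's bound applied to each conditional CHSH operator gives $\mathcal{I}\le 2\sqrt{2}$ with saturation only if $\langle\mathcal{I}_{CHSH}^{i,j}\rangle_{\rho_{AB|i,j}}=2\sqrt{2}$ for every outcome $(i,j)$ with $p(i,j)>0$.

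Since Alice's and Bob's observables $\mathcal{A}_{x},\mathcal{B}_{y}$ are the same across all four conditional scenarios, standard CHSH rigidity yields Hilbert-space factorizations $\mathcal{H}_{s}=\mathcal{H}_{s'}\otimes\mathcal{H}_{s''}$ with $\mathcal{H}_{s'}=\mathbb{C}^{2}$ and local unitaries $U_{A},U_{B}$ that realize the observables of~\eqref{abobs} on $\mathcal{H}_{s'}$ tensored with $\I_{s''}$, together with $(U_{A}\otimes U_{B})\rho_{AB|i,j}(U_{A}\otimes U_{B})^{\dagger}=\proj{\phi_{i,j}}_{A'B'}\otimes\xi^{(i,j)}_{A''B''}$ for each $(i,j)$. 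Averaging over $(i,j)$ and inserting Lemma~\ref{lem1} in the form $\rho_{AB}=\rho_{A}\otimes\rho_{B}$ produces
\begin{equation*}
(U_{A}\rho_{A}U_{A}^{\dagger})\otimes(U_{B}\rho_{B}U_{B}^{\dagger})=\sum_{i,j}p(i,j)\,\proj{\phi_{i,j}}\otimes\xi^{(i,j)}.
\end{equation*}
Because the four Bell projectors on $\mathcal{H}_{A'B'}$ have Pauli expansions supported only on terms of the form $\sigma^{a}\otimes\sigma^{a}$, expanding both sides in the Pauli basis of $A'B'$ and matching coefficients forces $U_{s}\rho_{s}U_{s}^{\dagger}=(\I/2)_{s'}\otimes\tau_{s''}$ for some $\tau_{s''}$, $p(i,j)=1/4$, and $\xi^{(i,j)}=\tau_{A''}\otimes\tau_{B''}$ independently of $(i,j)$; hence the junk is Eve-outcome-independent and already factorized across the two sources.

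To finish, I would pass to a purification of $\tilde{\rho}_{ABE}:=(U_{A}\otimes U_{B}\otimes\I_{E})\rho_{ABE}(U_{A}\otimes U_{B}\otimes\I_{E})^{\dagger}$. The residual constraint $\mathrm{Tr}_{E}[(\I\otimes N_{i,j}^{E})\tilde{\rho}_{ABE}]=\tfrac{1}{4}\proj{\phi_{i,j}}_{A'B'}\otimes\tau_{A''}\otimes\tau_{B''}$ is exactly that of an ideal entanglement-swapping experiment, whose canonical purification of $(\I/4)_{A'B'}\otimes\tau_{A''}\otimes\tau_{B''}$ factorizes as $\ket{\phi^{+}_{A'\overline{A}'}}\otimes\ket{\phi^{+}_{B'\overline{B}'}}\otimes\ket{\xi_{A''\overline{A}''}}\otimes\ket{\xi_{B''\overline{B}''}}$ after the $2\otimes 2$ splitting $\ket{\Phi_{4}}=\ket{\phi^{+}}\otimes\ket{\phi^{+}}$ of the four-dimensional purifying space. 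By uniqueness of purifications up to an isometry on $E$, this yields the unitaries $U_{\overline{A}},U_{\overline{B}}$ on $\mathcal{H}_{\overline{s}}=\mathbb{C}^{2}\otimes\mathcal{H}_{\overline{s}''}$ that simultaneously realize~\eqref{statest1} and~\eqref{meast1}. The main obstacle is this last step, because operational independence only factorizes the $A$-$B$ marginal while CHSH self-testing only constrains each conditional state individually; what bridges these two layers is the linear independence of the Bell projectors, which forbids residual target-junk correlations, together with the rigidity of the Bell-state steering ensemble of the maximally mixed state, which identifies Eve's measurement with the Bell basis on the four-dimensional target subspace of $\mathcal{H}_{E}$.
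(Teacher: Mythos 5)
Your proposal is correct and follows essentially the same route as the paper: rewrite $\mathcal{I}$ as in \eqref{wit2}, use the Tsirelson/sum-of-squares bound on each conditional CHSH operator to force $\Tr(\mathcal{I}_{CHSH}^{i,j}\rho_{AB}^{i,j})=2\sqrt{2}$ for every outcome of Eve, invoke CHSH rigidity to certify each $\rho_{AB}^{i,j}$ as $\proj{\phi_{i,j}}\otimes\xi^{(i,j)}$ together with Alice's and Bob's observables, and then combine this with Lemma~\ref{lem1} to recover the sources and Eve's measurement. Your Pauli-basis matching step (which pins down $p(i,j)=1/4$ and an outcome-independent, factorized junk state) and the subsequent purification argument are a more explicit rendering of the final steps that the paper delegates to an adaptation of Ref.~\cite{sarkar2023}, and they correctly dispense with the equal-probability assumption that the paper notes must be removed.
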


\begin{proof}
    Let us consider the functional $\mathcal{I}$ written in the form in Eq. \eqref{wit2}. As shown in \cite{Bamps}, one can use the sum-of-squares decomposition to obtain that $\mathcal{I}_{CHSH}^{0,0}\leq2\sqrt{2}\I$ and similarly one can show that $\mathcal{I}_{CHSH}^{i,j}\leq2\sqrt{2}\I$.  As $N_{i,j}^{E}$ are positive, we can conclude from Eq. \eqref{wit2} that
    \begin{eqnarray}
     \mathcal{I}\leq2\sqrt{2}\sum_{i,j=0,1}\langle\I\otimes N_{i,j}^{E}\rangle\leq 2\sqrt{2}.
        \end{eqnarray}
    Consequently, one can conclude that the quantum bound of $\mathcal{I}$ is $2\sqrt{2}$.

    Now, if one attains the value $2\sqrt{2}$ of the functional $\mathcal{I}$, then we have that
    \begin{eqnarray}
         \mathcal{I}= \sum_{i,j=0,1}\langle\mathcal{I}_{CHSH}^{i,j}N_{i,j}^{E}\rangle=2\sqrt{2}
    \end{eqnarray}
    which can be rewritten as
    \begin{eqnarray}
    \sum_{i,j=0,1}\langle(2\sqrt{2}\I-\mathcal{I}_{CHSH}^{i,j})N_{i,j}^{E}\rangle=0.
    \end{eqnarray}
    Consider now the states $p(i,j|E)\rho_{AB}^{i,j}=\Tr_E(N_{i,j}^E\rho_{ABE})$ where $p(i,j|E)=\Tr(N_{i,j}^E\rho_{ABE})$ using which the above expression can be written as
    \begin{eqnarray}
    \sum_{i,j=0,1}p(i,j|E)\Tr[(2\sqrt{2}\I-\mathcal{I}_{CHSH}^{i,j})\rho_{AB}^{i,j}]=0.
    \end{eqnarray}
As we assume that the local state of Eve is full-rank and thus every outcome of Eve occurs due to which $p(i,j|E)\ne0$ for any $i,j$. Thus, we have that 
\begin{eqnarray}
\Tr(\mathcal{I}_{CHSH}^{i,j}\rho_{AB}^{i,j})=2\sqrt{2}\qquad \forall i,j.
\end{eqnarray}
Now, one can straightway adapt the proof presented in \cite{sarkar2023} to first self-test the states $\rho_{AB}^{i,j}$ to be
\begin{eqnarray}\label{st22}
   U_A\otimes U_B \rho_{AB}^{i,j}U_A^{\dagger}\otimes U_B^{\dagger}=\proj{\phi_{i,j}}\otimes\rho_{A''B''}^{i,j}
\end{eqnarray}
where $\ket{\phi_{i,j}}$ are given in the manuscript. The measurements of Alice and Bob are also certified as given in Eq. \eqref{meast1}.
Now, using Lemma 1 from the manuscript and then the above-certified states \eqref{st22} one can then self-test the states generated by the sources to be the maximally entangled state \eqref{statest1} which is then used to certify Eve's measurement \eqref{meast1}. It is important to notice that in Ref. \cite{sarkar2023}, one additionally assumes that the probabilities $p(i,j|E)$ are equal for all $i,j$. However, one can straightaway adapt the proof without such an assumption and get exactly the same conclusions.  This completes the proof.
\end{proof}

\textit{Discussions---} 
Let us observe that the assumptions considered in this work are weaker when compared to the bilocality scenario \cite{pironio1} as we allow Eve to influence Alice's and Bob's results. Furthermore, in any quantum network including the bilocality scenario one needs to further assume that the sources are statistically independent of each other [see nevertheless \cite{sarkar15}]. This is an extremely strong assumption as one can never operationally establish that two sources are independent of each other. However, here we do not put any restrictions on the sources and even allow them to be entangled. Furthermore, the bilocality scenario has already been experimentally implemented \cite{Sun} and thus we believe that the violation of inequality \eqref{Wit1} can be easily tested. 

Analyzing the above result from a realist perspective gives an interesting insight towards understanding whether a measurement on an entangled counterpart produces a physical change on the other. In the Bell scenario, a possible explanation of the observed nonlocal correlations is that the measurement by Alice updates the state with Bob or vice versa. However, such an explanation in the above-presented scenario is not consistent. First, consider that Eve performed her entangled measurement before Alice and Bob, then as the states between Alice and Bob are entangled then one can explain the violation of classical-independence using a similar realist explanation as the Bell scenario. However, consider now that Eve has not performed her entangled measurement, then as the states between Alice and Bob are separable any measurement by Alice should not alter Bob's state but can alter Eve's state. Consequently, there exist spacelike frames of reference where Alice's state update is caused by Bob's measurement and other frames where it remains unchanged. 
Thus, whether the "physical state" of Alice gets updated when Bob performs a measurement depends on the information about Eve's result which again is problematic if one considers that the cause-effect relationship is not epistemic.  

Several interesting problems follow up from our work. The most interesting among them would be to explore in detail whether a cause-effect relationship between two events is consistent in quantum theory or not. A simpler problem will be to extend the weak-bilocality scenario to the multipartite regime with arbitrary number of sources or higher number of outcomes. Furthemore, it will be extremely interesting if one can use the above-presented self-testing result to construct a device-independent key distribution scheme or for randomness certification.

%\begin{acknowledgments}
{\it{Acknowledgments---}}We thank Stefano Pironio for insightful comments. This project was funded within the QuantERA II Programme (VERIqTAS project) that has received funding from the European Union’s Horizon 2020 research and innovation programme under Grant Agreement No 101017733.
%\end{acknowledgments}

%\bibliography{ref.bib}
%apsrev4-2.bst 2019-01-14 (MD) hand-edited version of apsrev4-1.bst
%Control: key (0)
%Control: author (8) initials jnrlst
%Control: editor formatted (1) identically to author
%Control: production of article title (0) allowed
%Control: page (0) single
%Control: year (1) truncated
%Control: production of eprint (0) enabled
\providecommand{\noopsort}[1]{}\providecommand{\singleletter}[1]{#1}%

\end{document}